\theoremstyle{plain}
   \newtheorem{theorem}{Theorem}}
\theoremstyle{plain}
   \newtheorem{lemma}{Lemma}}
\newcommand{\concc}{\ensuremath{\mathcal{C}}}
\newcommand{\innerc}{\ensuremath{\mathcal{C}^{\mathrm{i}}}}
\newcommand{\outerc}{\ensuremath{\mathcal{C}^{\mathrm{o}}}}
\newcommand{\concn}{\ensuremath{n}}
\newcommand{\innern}{\ensuremath{n^{\mathrm{i}}}}
\newcommand{\outern}{\ensuremath{n^{\mathrm{o}}}}
\newcommand{\conck}{\ensuremath{k}}
\newcommand{\innerk}{\ensuremath{k^{\mathrm{i}}}}
\newcommand{\outerk}{\ensuremath{k^{\mathrm{o}}}}
\newcommand{\concd}{\ensuremath{d}}
\newcommand{\innerd}{\ensuremath{d^{\mathrm{i}}}}
\newcommand{\outerd}{\ensuremath{d^{\mathrm{o}}}}
\newcommand{\concerrors}{\ensuremath{e}}
\newcommand{\concerrorsfail}{\ensuremath{e_\mathrm{fail}}}
\newcommand{\outererrors}{\ensuremath{\varepsilon(k)}}
\newcommand{\outererrorsexplicit}[1]{\ensuremath{\varepsilon(#1)}}
\newcommand{\outererasures}{\ensuremath{\tau(k)}}
\newcommand{\outererasuresexplicit}[1]{\ensuremath{\tau(#1)}}
\newcommand{\innercw}[1]{\ensuremath{\mathbf{#1}^{\mathrm{i}}}}
\newcommand{\innertcw}[1]{\ensuremath{\mathbf{#1}^{\mathrm{i}, T}}}
\newcommand{\outercw}[1]{\ensuremath{\mathbf{#1}^{\mathrm{o}}}}
\newcommand{\Hd}[1]{\ensuremath{\mathrm{d}_\mathrm{H}(#1)}}
\newcommand{\erasure}{\ensuremath{\vartimes}}
\newcommand{\floor}[1]{\ensuremath{\lfloor #1\rfloor}}
\newcommand{\biggfloor}[1]{\ensuremath{\bigg\lfloor #1\bigg\rfloor}}
\newcommand{\bigceiling}[1]{\ensuremath{\big\lceil #1\big\rceil}}
\newcommand{\F}{\ensuremath{\mathbbm{F}}}
\newcommand{\mat}[1]{\ensuremath{#1}}
\begin{document}

\title{Decoding Generalized Concatenated Codes Using Interleaved Reed--Solomon Codes}

\author{
\authorblockN{Christian Senger, Vladimir Sidorenko, Martin Bossert}\thanks{This work has been supported by DFG,
Germany, under grants BO~867/15, and BO~867/17. Vladimir Sidorenko is on leave from IITP, Russian Academy of Sciences, Moscow, Russia.}
\authorblockA{\small Inst. of Telecommunications and Applied Information Theory\\
Ulm University, Ulm, Germany \\
\{christian.senger$\;\vert\;$vladimir.sidorenko$\;\vert\;$martin.bossert\}@uni-ulm.de}
\and
\authorblockN{Victor Zyablov}
\authorblockA{\small Inst. for Information Transmission Problems\\
Russian Academy of Sciences, Moscow, Russia \\
zyablov@iitp.ru}
}

\maketitle

\begin{abstract}
Generalized Concatenated codes are a code construction consisting of a number of outer codes whose code symbols are protected by an inner code. As outer codes, we assume the most frequently used Reed--Solomon codes; as inner code, we assume some linear block code which can be decoded up to half its minimum distance. Decoding up to half the minimum distance of Generalized Concatenated codes is classically achieved by the Blokh--Zyablov--Dumer algorithm, which iteratively decodes by first using the inner decoder to get an estimate of the outer code words and then using an outer error/erasure decoder with a varying number of erasures determined by a set of pre-calculated thresholds. In this paper, a modified version of the Blokh--Zyablov--Dumer algorithm is proposed, which exploits the fact that a number of outer Reed--Solomon codes with average minimum distance $\bar{d}$ can be grouped into one single Interleaved Reed--Solomon code which can be decoded beyond $\bar{d}/2$. This allows to skip a number of decoding iterations on the one hand and to reduce the complexity of each decoding iteration significantly -- while maintaining the decoding performance -- on the other.
\end{abstract}

\section{Introduction}

In 1966, Forney introduced the concept of concatenated codes \cite{forney:1966}. It was generalized in 1976 by Blokh and Zyablov to Generalized Concatenated (GC) codes \cite{blokh_zyablov:1976}. The GC approach allows to design powerful codes with large block lengths using short, and thus easily decodable, component codes. The designed distance  and the performance of GC codes can be easily estimated theoretically. This allows to design GC codes for applications like e.g. optical lines, where block error rates in the order of $10^{-15}$ are required, a region, where simulations are not feasible. GC codes can be decoded up to half their mini\-mum distance using a sufficiently large number of decoding attempts for each outer code in the Blokh--Zyablov--Dumer algorithm (BZDA) \cite{blokh_zyablov:1982}, see also Dumer \cite{dumer:1980, dumer:1998}. We should also mention papers by Sorger \cite{sorger:1993}, and K\"{o}tter \cite{koetter:1993} who suggested interesting modifications of a BMD decoder in such a way that multi-attempt decoding of the outer code can be made ''in one step''.
Nielsen suggested in \cite{Nielsen:2001phd} to use the Guruswami--Sudan list decoding algorithm \cite{guruswami_sudan:1999} for decoding the outer codes and has shown that in this case also one decoding attempt is sufficient to allow decoding up to half the minimum distance of the GC code. In this paper, we employ another idea, which allows to decrease the number of outer decodings, but also to skip many decodings of the inner code. This idea is based on Interleaved Reed--Solomon (IRS) codes \cite{schmidt_sidorenko_bossert:2005b}. Other aspects of using IRS codes in concatenated codes were considered in \cite{justesen_thommesen_hoeholdt:2004, schmidt_sidorenko_bossert:2006c}.

The rest of our paper is organized as follows: In Section \ref{section:GCC} we explain GC codes as well as the required notations and assumptions. Section \ref{section:IRS} explains IRS codes and where they appear within GC codes. Section \ref{section:BZouterRS} gives an overview of the BZDA as introduced in \cite{blokh_zyablov:1982}. A generalization of this is given in Section \ref{section:BZouterIRS}, leading to a new algorithm which maintains the error-correcting performance of GC codes while reducing the number of outer decodings and skipping many inner decodings. In Section \ref{section:Example} we illustrate our results by means of some examples.

\section{Generalized Concatenated Codes and their Decoding}\label{section:GCC}

Encoding of a GC code $\concc(\concn, \conck, \concd)$ of order $\ell$ is as follows, where we restrict ourselves to outer RS codes $\outerc_l(\outern, \outerk_l, \outerd_l)$, $l=0,\ldots, \ell$, over the binary extension field $\F_{2^m}$ and an inner binary block code $\innerc(\innern, \innerk, \innerd)$ with dimension $\innerk=\ell m$.

The first step is outer encoding. For this we take $\ell$ codewords $\outercw{c}_l$ of the outer codes $\outerc_l$ and put them as rows into an $\ell\times \outern$ matrix 
\begin{equation*}
\mat{A}:=\left(
\begin{array}{c}
  \outercw{c}_0 \\
  \vdots \\
  \outercw{c}_{\ell-1}
\end{array}
\right),\;\outercw{c}_l\in \outerc_l
\end{equation*}
over $\F_{2^m}$. The second step is inner encoding, where the binary counterparts of the columns $j=0, \ldots, \outern-1$ of $\mat{A}$ are encoded by the inner code to obtain code word columns $\innertcw{c}_j\in\innerc$. The result of this procedure is a binary $\innern\times\outern$ matrix $\mat{C}:=
(\innertcw{c}_0, \ldots, \innertcw{c}_{\outern-1})$, which in turn is a code word of the GC code $\concc$.

The inner code $\innerc$ has the following nested structure: The code words are obtained by encoding an arbitrary binary information vector $(u_0, \ldots, u_{\innerk-1})$. If we fix $l\in \{0,\ldots, \ell-1\}$ groups of $m$ information bits starting from $u_0$ and encode, we obtain a subcode $\innerc_{l}(u_0, \ldots, u_{(l-1)m-1})\subseteq \innerc$ with distance $\innerd_{l}$. Note that $\innerd_{0}\leq \cdots \leq \innerd_{\ell-1}$ due to the special choice of encoders for the subcodes $\innerc_l$. Obviously, $\innerc_0=\innerc$. The minimum distance of the GC code $\concc$ is then lower bounded by its designed distance
\begin{equation}\label{eqn:designeddistance}
  \concd:=\mathrm{min}\{\outerd_0\innerd_0,\ldots, \outerd_{\ell-1}\innerd_{\ell-1}\}.
\end{equation}
The channel adds a binary error matrix $\mat{E}$ of weight $e$ to the transmitted concatenated code matrix $\mat{C}$, resulting in a received matrix $\mat{R}:=(\innertcw{r}_0, \ldots, \innertcw{r}_{\outern-1})=\mat{C}+\mat{E}$ at the receiver.

As output of GC decoding with the BZDA we want to obtain a matrix
\begin{equation*}
\mat{\bar{A}}:=(\bar{a}_{i, j})_{i=0,\ldots, \ell-1,\, j=0, \ldots, \outern-1}=
\left(
\begin{array}{c}
  \outercw{\bar{c}}_0 \\
  \vdots \\
  \outercw{\bar{c}}_{\ell-1}
\end{array}
\right),\;\outercw{\bar{c}}_l\in \outerc_l
\end{equation*}
over $\F_{2^m}$, which is an estimate of the matrix $\mat{A}$. Decoding consists of $\ell$ iterations, where iteration $l= 0,\ldots, \ell-1$ is as follows:
\begin{enumerate}[(i)]
  \item Decoding of all columns $\innertcw{r}_j$, $j= 0,\ldots, {\outern-1}$, of the received matrix $\mat{R}$ by BMD decoders for the nested subcodes $\innerc_{l}(\bar{a}_{0, j}, \ldots, \bar{a}_{l-1, j})$, correcting up to $\floor{(\innerd_{l}-1)/2}$ errors and altogether yielding a new estimate ${\outercw{\tilde{c}}}_{l}$ for the $l$-th row of $\mat{\bar{A}}$.
  \item Execution of $z_{l}$ attempts of decoding ${\outercw{\tilde{c}}}_{l}$ with $\outerc_{l}$ and a different number of erased symbols in each attempt, which yields a candidate list of at most size $z_{l}$. Finally, the ''best'' candidate $\outercw{\bar{c}}_{l}$ from this list has to be selected using some criterion and inserted into $\mat{\bar{A}}$. 
\end{enumerate}

Note that because of this recurrent structure it is sufficient to consider the BZDA only for ordinary concatenated codes, as decoding of GC codes simply means repeated application of this special case for the sequence $\innerc_{0}, \ldots, \innerc_{l}(\bar{a}_{0, j}, \ldots, \bar{a}_{l-1, j}), \ldots, \innerc_{\ell-1}(\bar{a}_{0, j}, \ldots, \bar{a}_{\ell-2, j})$
of the nested inner subcodes, $j=0, \ldots, \outern-1$, and the corresponding outer codes \cite{dumer:1980, dumer:1998}. The details of the BZDA for ordinary concatenated codes are described in Section \ref{section:BZouterRS}.
In \cite{blokh_zyablov:1982} it is shown that if $\forall\,l\in \{0, \ldots, \ell-1\}:z_{l}=\innerd_{l}/2$ for even $\innerd$, the BZDA can decode up to $\floor{(\outerd_l\innerd_l-1)/2}$ errors in the $l$-th iteration and thus by (\ref{eqn:designeddistance}) also up to $\floor{(\concd-1)/2}$ errors in a GC code word.

\section{Interleaved Reed--Solomon Codes}\label{section:IRS}

Observe that the matrix $\mat{A}$ is just an interleaved set of $\ell$ different RS codes, hence an IRS code. For IRS codes an efficient decoding algorithm was suggested in \cite{schmidt_sidorenko_bossert:2006c}, which has only $\ell$
times the complexity of the Berlekamp--Massey algorithm for decoding one single RS
code. The algorithm allows to correct at most
$r(\ell):=\floor{(\bar{\outerd}-1)\ell/(\ell+1)}$ erroneous columns of $\mat{A}$, where
$\bar{\outerd}:=1/\ell\sum_{l=0}^{\ell-1}\outerd_{l}$ is the average minimum
distance of the interleaved set of RS codes and
\begin{equation}\label{eqn:l1}
\outerd_{l}>r(\ell),\,l=0,...,\ell-1.
\end{equation}
The IRS decoding algorithm from \cite{schmidt_sidorenko_bossert:2006c} yields a decoding failure with some probability. However, this probability can be made small and is neglected here.

If the complete matrix $\mat{A}$ does not satisfy (\ref{eqn:l1}), we can
split it into a number of submatrices with the same length as $\mat{A}$, which all fulfill (\ref{eqn:l1}) and thus can be decoded by the IRS decoding algorithm from \cite{schmidt_sidorenko_bossert:2006c}. Assume that
\begin{equation*}
A_{v}:=\left(
\begin{array}{c}
  \outercw{c}_{v} \\
  \vdots \\
  \outercw{c}_{v+\tilde{\ell}-1}
\end{array}
\right),\;\outercw{c}_l\in \outerc_l
\end{equation*}
is such a submatrix of $\mat{A}$ and forms an IRS code with average minimum distance $\bar{\outerd}$, that satisfies both constraint (\ref{eqn:l1}) and $\bar{\outerd}\innerd_v\tilde{\ell}/(\tilde{\ell}+1)\geq \concd$. The main idea of applying the IRS decoding algorithm from \cite{schmidt_sidorenko_bossert:2006c} to GC codes is as follows: We can replace $\tilde{\ell}$ iterations
$v,...,v+\tilde{\ell}-1$ of the BZDA by the following single iteration:
\begin{enumerate}[(i)]
  \item Decoding of all columns $\innertcw{r}_j$, $j= 0,\ldots, {\outern-1}$, of the received matrix $\mat{R}$ by BMD decoders for the subcodes $\innerc_v(\bar{a}_{0, j}, \ldots, \bar{a}_{v-1, j})$, correcting up to $\floor{(\innerd_v-1)/2}$ errors and yielding an estimate $\mat{\tilde{A}}_v$ for the submatrix $\mat{A}_v$.
 \item Execution of $z_v$ attempts of IRS decoding $\mat{\tilde{A}}_v$ with a different number of erased columns in each attempt, which yields a candidate list of at most size $z_v$. Finally, the ''best'' candidate $\mat{\bar{A}}_v$ from this list has to be selected using some criterion and inserted into $\mat{\bar{A}}$.
\end{enumerate}

As a result of this method, we skipped $\outern(\ell-1)$ inner decodings and we will show that the number of required decoding attempts for the outer code to guarantee decoding up to $\floor{(\outerd\innerd_v-1)/2}$ channel errors is much
smaller than in the original BZDA, which in practice means $z_v\in\{2, 3\}$. Eventually, our modified algorithm corrects up to half the minimum distance of the GC code.

\section{BZDA with Outer BMD Decoding}\label{section:BZouterRS}

In this section, we consider decoding of a {\em simple} concatenated code $\concc$, which consists of the outer RS code $\outerc$ and the inner binary code $\innerc$. This corresponds to the \mbox{$l$-th} iteration of the refined BZDA from \cite{dumer:1980} for a GC code where it holds w.l.o.g $\outerc=\outerc_{l}$, $\innerc=\innerc_{l}(\bar{a}_{0, j}, \ldots, \bar{a}_{l-1, j})$ and $\forall\,i\in\{0, \ldots, l-1\}, j_1, j_2\in\{0, \ldots, \outern-1\}:\bar{a}_{i, j_1}=\bar{a}_{i, j_2}$.

First (step (i) of the BZDA), we decode the columns $\innertcw{r}_j$ of the received matrix $\mat{R}$ by a BMD decoder for $\innerc$, correcting up to $\floor{(\innerd-1)/2}$ errors and yielding code word estimates $\innertcw{\tilde{c}}_j$ or decoding failures. Decoding of the outer code $\outerc$ is performed with respect to the ordered set of thresholds $\{T_1^{(z)}, \ldots, T_z^{(z)}\}$ with $0\leq T_1^{(z)}<\cdots<T_z^{(z)}\leq (\innerd-1)/2$. For each decoding attempt $k\in\{1,\ldots,z\}$
the decoding results of the inner decoder depend on the threshold $T_k$ in the following manner: The symbols $\tilde{r}^\mathrm{o}_j(k)\in \F_{2^m} \cup\{\erasure\}$ delivered to the outer decoder at position $j$ are
\begin{equation}\label{eqn:rtilde}
  \tilde{r}^\mathrm{o}_j(k):=\left\{\begin{array}{ll}
      \mathrm{enc}_{\innerc}^{-1}(\innercw{\tilde{c}}_j), & \Hd{\innercw{r}_j, \innercw{\tilde{c}}_j}\leq T_k,\\
      \erasure,  & \Hd{\innercw{r}_j, \innercw{\tilde{c}}_j}> T_k,\\
      \erasure,  & \text{failure of the inner decoder},
                        \end{array}\right.
\end{equation}
where $\innertcw{r}_j$ is the received word in the $j$-th column, $\innertcw{\tilde{c}}_j$ is the result of inner decoding, $\mathrm{enc}_{\innerc}^{-1}(\cdot)$ maps code words of $\innerc$ to the corresponding $q$-ary information symbols, and $\erasure$ is the symbol for an erasure. As result from outer decoding $\outercw{\tilde{r}}(k):=(\tilde{r}^\mathrm{o}_0(k), \ldots, \tilde{r}^\mathrm{o}_{\outern-1}(k))$ we obtain the outer code word estimate $\outercw{\tilde{c}}(k)=(\tilde{c}^\mathrm{o}_0(k), \ldots, \tilde{c}^\mathrm{o}_{\outern-1}(k))$. From (\ref{eqn:rtilde}) follows that thresholds with equal integers parts yield equal decoding attempts, so the number of $z^\star$ actual attempts may be smaller than the number $z$ of thresholds, i.e. $z^\star\leq z$.

The numbers of decoding errors and erasures occurring at decoding $\outercw{\tilde{r}}(k)$ are denoted by $\outererrors$ and $\outererasures$, respectively. The outer RS code $\outerc$ can successfully decode as long as $2\outererrors+\outererasures<\outerd$, 
since we assume outer BMD decoding in this section. For a fixed number $z$ of thresholds, the following theorem fixes the optimum values of the thresholds such that the decoding bound of the BZDA is maximized.

\begin{theorem}[Blokh, Zyablov \cite{blokh_zyablov:1982}]\label{theorem:thresholdsind}
  For a concatenated code with outer BMD-decoded RS code and inner BMD-decoded code $\innerc(\innern, \innerk, \innerd)$, the set of thresholds $\{T_1^{(z)}, \ldots, T_z^{(z)}\}$ which maximizes the decoding bound is determined by
  \begin{equation}\label{eqn:thresholdsind}
    T_k^{(z)}:=k\cdot \frac{\innerd+1}{2z+1}-1,
  \end{equation}
  $k\in\{1, \ldots, z\}$.
\end{theorem}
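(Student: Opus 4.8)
\noindent\emph{Proof plan.} The plan is to express the decoding bound of the $z$‑attempt procedure as a function of the threshold set and then optimize over all admissible sets. Since the outer RS code is BMD decoded, attempt $k$ recovers $\outercw{\tilde{c}}(k)$ exactly when $2\outererrors+\outererasures<\outerd$, so the procedure succeeds as soon as this holds for some $k\in\{1,\dots,z\}$ (a suitable rule then selects the correct candidate from the list), and the decoding bound is the largest weight $\concerrors$ of $\mat{E}$ for which this is guaranteed for every error matrix. Fixing such a matrix and writing $\concerrors=\sum_j e_j$ with $e_j$ the weight of column $j$, I would distinguish the inner decoding outcomes per column: (a) the decoder returns the transmitted $\innercw{c}_j$, so $\Hd{\innercw{r}_j,\innercw{\tilde{c}}_j}=e_j\le\floor{(\innerd-1)/2}$; (b) it returns a wrong word at distance $s_j:=\Hd{\innercw{r}_j,\innercw{\tilde{c}}_j}\le\floor{(\innerd-1)/2}$, which by the triangle inequality forces $e_j\ge\innerd-s_j$; or (c) it fails, forcing $e_j\ge\floor{(\innerd-1)/2}+1$. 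Writing $\phi_j(k)\in\{0,1,2\}$ for the contribution of column $j$ to $2\outererrors+\outererasures$ at attempt $k$ ($2$ for an outer error, $1$ for an erasure, $0$ otherwise), (\ref{eqn:rtilde}) gives $\phi_j(k)=1$ iff $e_j>T_k$ in case (a); $\phi_j(k)=2$ iff $s_j\le T_k$, else $1$, in case (b); and $\phi_j(k)\equiv1$ in case (c).

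Attempts whose thresholds have equal integer parts are redundant, so assume $0\le\floor{T_1}<\dots<\floor{T_z}\le\floor{(\innerd-1)/2}$. Summing over $k$, $\sum_k\phi_j(k)$ equals $|\{k:\floor{T_k}<e_j\}|$ in case (a), worst relative to $e_j$ at $e_j=\floor{T_k}+1$ (ratio $k/(\floor{T_k}+1)$); equals $2z-|\{k:\floor{T_k}<s_j\}|$ in case (b), worst together with $e_j\ge\innerd-s_j$ at $s_j=\floor{T_k}$ (ratio $(2z-k+1)/(\innerd-\floor{T_k})$); and equals $z$ with $e_j\ge\floor{(\innerd-1)/2}+1$ in case (c). Hence $\sum_{k=1}^z\phi_j(k)\le\alpha e_j$, where
\[
\alpha\ :=\ \max\Bigl\{\ \max_{1\le k\le z}\frac{k}{\floor{T_k}+1}\ ,\ \ \max_{1\le k\le z}\frac{2z-k+1}{\innerd-\floor{T_k}}\ ,\ \ \frac{z}{\floor{(\innerd-1)/2}+1}\ \Bigr\},
\]
and summing over the columns, $z\min_k(2\outererrors+\outererasures)\le\sum_k(2\outererrors+\outererasures)=\sum_j\sum_k\phi_j(k)\le\alpha\concerrors$, so decoding is guaranteed whenever $\concerrors<z\outerd/\alpha$. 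For a matching failing weight I would take, for each $k=1,\dots,z$, $\outerd/(2z+1)$ copies of the extremal case‑(a) column ($e_j=\floor{T_k}+1$) and $\outerd/(2z+1)$ copies of the extremal case‑(b) column ($s_j=\floor{T_k}$); every attempt‑wise total of $2\outererrors+\outererasures$ is then $\outerd$, so this matrix fails, and since $(\floor{T_k}+1)+(\innerd-\floor{T_k})=\innerd+1$ its weight is $\frac{\outerd}{2z+1}\sum_{k=1}^{z}(\innerd+1)=\frac{z\outerd(\innerd+1)}{2z+1}$ \emph{for every admissible threshold set}. Hence the decoding bound of every admissible choice is at most $\lceil z\outerd(\innerd+1)/(2z+1)\rceil-1$.

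It remains to see that (\ref{eqn:thresholdsind}) attains $\alpha=(2z+1)/(\innerd+1)$ and hence, by the sufficient condition, this bound. For each fixed $k$ the term $k/(\floor{T_k}+1)$ decreases and $(2z-k+1)/(\innerd-\floor{T_k})$ increases in $\floor{T_k}$, so their maximum — and thus $\alpha$ — cannot be pushed below the common value obtained by equating them; solving $k/(T_k+1)=(2z-k+1)/(\innerd-T_k)$ yields exactly $T_k=k(\innerd+1)/(2z+1)-1$, i.e.\ (\ref{eqn:thresholdsind}), for which both terms equal $(2z+1)/(\innerd+1)$ for every $k$, and a short computation shows the case‑(c) term $z/(\floor{(\innerd-1)/2}+1)$ is then no larger. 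Conversely, any threshold set with $\alpha<(2z+1)/(\innerd+1)$ would force, for each $k$, both $\floor{T_k}>k(\innerd+1)/(2z+1)-1$ and $\floor{T_k}<k(\innerd+1)/(2z+1)-1$, which is impossible. Therefore (\ref{eqn:thresholdsind}) minimizes $\alpha$ over all admissible threshold sets, equivalently maximizes the decoding bound.

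The step I expect to be the main obstacle is turning the converse of the second paragraph into a fully rigorous construction. One must realize the extremal columns: the case‑(b) column of radius $s_j$ needs two inner codewords at distance exactly $\innerd$ (supplied by a minimum‑weight codeword and a suitable error support), while the all‑erasure profile needed from the case‑(a) column $e_j=\floor{T_z}+1$ is available only when $\floor{T_z}<\floor{(\innerd-1)/2}$ and otherwise must come from a case‑(c) column, which in turn requires the inner code to have covering radius larger than $\floor{(\innerd-1)/2}$ — so for perfect inner codes, and for the boundary value $z=\floor{(\innerd-1)/2}+1$, one instead uses $\lceil\outerd/2\rceil$ columns each carrying a full inner codeword's worth of errors. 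In addition, the integrality of $T_k$ (only $\floor{T_k}$ matters), of $\outerd/(2z+1)$, and of the column multiplicities has to be absorbed into the various floors and ceilings.
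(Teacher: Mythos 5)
The paper itself gives no proof of this theorem --- it is quoted from the Blokh--Zyablov book --- so your argument has to stand on its own; its min--max framing (minimize the failing weight over error/erasure configurations, then pick the thresholds maximizing that minimum) is the same methodology the paper sketches in Section~\ref{section:BZouterIRS} for the IRS generalization. Your case analysis (a)/(b)/(c) is correct, the threshold-independent failing configuration (pairs of columns of weights $\floor{T_k}+1$ and $\innerd-\floor{T_k}$, hence $\innerd+1$ per pair, forcing $2\varepsilon(k)+\tau(k)=\outerd$ in every attempt) is a genuinely nice converse, and in the continuous relaxation (no integer parts) equating $k/(T_k+1)$ with $(2z-k+1)/(\innerd-T_k)$ does recover (\ref{eqn:thresholdsind}) and makes both sides meet at $z\outerd(\innerd+1)/(2z+1)$.

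The genuine gap is the matching step once integer parts enter, and it is not a floor-chasing technicality. Your achievability is an averaged bound: from ``all $z$ attempts fail'' you use only $\sum_k\big(2\varepsilon(k)+\tau(k)\big)\ge z\outerd$, giving $\concerrors\ge z\outerd/\alpha$ with $\alpha$ computed from the $\floor{T_k}$. This is lossy: for $\innerd=12$, $z=3$ the thresholds (\ref{eqn:thresholdsind}) have integer parts $0,2,4$, the $k=1$ term makes $\alpha=1$, and your certificate guarantees only $\concerrors<3\outerd$, whereas the true guarantee for this very set is $\concerrors<\outerd(\floor{T_3^{(3)}}+1)=5\outerd$ (weighting the attempts by $1,2,2$ in your own column-cost accounting certifies it, and $\outerd$ correctly decoded columns of weight $5$ attain it), while your converse weight is $39\outerd/7$. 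So your lower and upper bounds coincide only when $(\innerd+1)/(2z+1)$ is an integer, and the closing claim that minimizing $\alpha$ is ``equivalent'' to maximizing the decoding bound is unjustified, since the decoding bound is merely lower-bounded by $z\outerd/\alpha$. Moreover, the floors cannot simply be ``absorbed'': for the same parameters the integer thresholds $1,3,5$ strictly beat the integer parts $0,2,4$ of (\ref{eqn:thresholdsind}) --- weighting attempts $1,2,3$ by $2,2,3/2$ shows every failing pattern for $1,3,5$ has weight at least $5.5\,\outerd$ --- so a rigorous optimality proof must argue per attempt (the exchange/LP-type analysis behind Theorem~\ref{theorem:boundind}), not through the averaged inequality, and must state precisely in which (continuous) sense the formula is optimal. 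A small point in the other direction: your covering-radius worry about the column with $e_j=\floor{T_z}+1>\floor{(\innerd-1)/2}$ is unnecessary, because whatever the inner decoder then returns, that column still contributes at least $1$ to every attempt, so your failing construction survives unchanged.
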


If the thresholds are chosen according to (\ref{eqn:thresholdsind}), the decoding bound is given by the following theorem in a sense that the transmitted code word is among the elements of the result list $\mathcal{L}:=\{\outercw{\tilde{c}}(1), \ldots, \outercw{\tilde{c}}(z^\star)\}$.

\begin{theorem}[Blokh, Zyablov \cite{blokh_zyablov:1982}]\label{theorem:boundind}
  For a concatenated code with outer BMD-decoded RS code $\outerc(\outern, \outerk, \outerd)$ and inner BMD-decoded code $\innerc(\innern, \innerk, \innerd)$, the decoding bound is
 \begin{equation}\label{eqn:boundind}
   \concerrors< \outerd(\floor{T^{(z)}_z}+1)=\outerd z\cdot \biggfloor{\frac{\innerd+1}{2z+1}}.
 \end{equation}
\end{theorem}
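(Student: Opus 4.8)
\emph{Proof idea.} The plan is to show that whenever $\concerrors<\outerd(\floor{T_z^{(z)}}+1)$, at least one of the $z$ outer decoding attempts delivers an error/erasure pattern with $2\outererrors+\outererasures<\outerd$; since $\outerc$ is BMD--decoded, that attempt then outputs the transmitted outer code word, which hence lies in $\mathcal{L}$. (Here $\floor{T_z^{(z)}}+1=\floor{z(\innerd+1)/(2z+1)}$; this equals $z\floor{(\innerd+1)/(2z+1)}$ e.g.\ whenever $2z+1\mid\innerd+1$, in particular for $\innerd$ even and $z=\innerd/2$, where (\ref{eqn:boundind}) reads $\concerrors<\outerd\innerd/2$.)

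First I would fix the transmitted code matrix and classify the columns after inner decoding. Put $d_j:=\Hd{\innercw{r}_j,\innercw{\tilde{c}}_j}$ and let $e_j$ be the number of channel errors in column $j$, so $\sum_j e_j=\concerrors$. A column is \emph{correct} if $\innercw{\tilde{c}}_j=\innercw{c}_j$, so $e_j=d_j\le\floor{(\innerd-1)/2}$; \emph{miscorrected} if $\innercw{\tilde{c}}_j$ is a different code word of $\innerc$, so $d_j\le\floor{(\innerd-1)/2}$ and, since distinct inner code words disagree in at least $\innerd$ places, the triangle inequality forces $e_j\ge\innerd-d_j$; or a \emph{failure}, which is erased at every attempt by (\ref{eqn:rtilde}) and has $e_j\ge\floor{(\innerd-1)/2}+1$. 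By (\ref{eqn:rtilde}) the attempt with threshold $T_k^{(z)}$ behaves like the integer threshold $t_k:=\floor{T_k^{(z)}}$: a column is erased iff $d_j>t_k$ or it failed, and it feeds an outer error iff it is miscorrected with $d_j\le t_k$. Write $\phi(t):=2\outererrors+\outererasures$ for the error/erasure pattern obtained at integer threshold $t$, so attempt $k$ realizes $\phi(t_k)$.

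The key quantity is $\Sigma:=\sum_{t=0}^{t_z}\phi(t)$, which I would estimate from both sides. \textbf{Upper bound.} Splitting $\Sigma$ into column contributions, a correct column contributes $\sum_{t=0}^{t_z}\mathbbm{1}[d_j>t]=\min(d_j,t_z+1)\le e_j$, a failure contributes $t_z+1\le e_j$, and a miscorrected column contributes at most $e_j$ as well, using $e_j\ge\innerd-d_j$ together with $2(t_z+1)=2\floor{z(\innerd+1)/(2z+1)}\le\innerd$, the latter being immediate from (\ref{eqn:thresholdsind}). Hence $\Sigma\le\concerrors$. \textbf{Lower bound.} Assume \emph{for contradiction} that every attempt fails, i.e.\ $\phi(t_k)\ge\outerd$ for all $k\in\{1,\dots,z\}$. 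I claim this forces $\phi(t)\ge\outerd$ for \emph{all} integers $t\in\{0,\dots,t_z\}$, so that $\Sigma\ge(t_z+1)\outerd$. Together with the upper bound this gives $\concerrors\ge(t_z+1)\outerd=\outerd(\floor{T_z^{(z)}}+1)$, contradicting the hypothesis; hence some attempt satisfies $2\outererrors+\outererasures<\outerd$, as desired.

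The claim in the lower bound --- that $\phi\ge\outerd$ at \emph{every} integer threshold in $\{0,\dots,t_z\}$, not only at the $z$ thresholds $t_k$ that (\ref{eqn:thresholdsind}) actually realizes --- is the only non--routine part, and it is where the specific spacing (\ref{eqn:thresholdsind}) enters. For an integer $t$ skipped by (\ref{eqn:thresholdsind}) one has $\phi(t+1)-\phi(t)=\#\{\text{miscorrected }j:d_j=t+1\}-\#\{\text{correct }j:d_j=t+1\}$, so $\phi$ can climb from a value $<\outerd$ at a skipped $t$ up to $\phi(t_{k+1})\ge\outerd$ at the next realized threshold only through miscorrected columns of small discrepancy, and the number of these is bounded by the total number of miscorrected columns; on the other hand, whenever (\ref{eqn:thresholdsind}) skips integers one has $2(t_z+1)<\innerd$, so every miscorrected or failed column contributes strictly fewer than $e_j$ to $\Sigma$ --- extra slack in the upper bound. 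A careful count combines these two facts to rule out a skipped integer with $\phi(t)<\outerd$ under $\concerrors<\outerd(t_z+1)$. I expect the bulk of the work to be exactly this count, together with the floor bookkeeping in the upper bound; everything else is routine.
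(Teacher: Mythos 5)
The paper itself gives no proof of this theorem---it is quoted from Blokh--Zyablov, and the derivation sketched in Section~\ref{section:BZouterIRS} for the generalized case proceeds by a different route: minimize the number of channel errors subject to \emph{all} $z$ attempts failing, then choose the thresholds maximizing that minimum. Your Forney/GMD-style summation $\Sigma=\sum_{t=0}^{t_z}\phi(t)$ is therefore a genuinely different strategy, and its upper-bound half ($\Sigma\leq\concerrors$, via the per-column case analysis and $2(t_z+1)\leq\innerd$) is sound. The gap is exactly where you stop: the lower bound. The claim that failure of the $z$ realized attempts forces $\phi(t)\geq\outerd$ at \emph{every} integer $t\in\{0,\dots,t_z\}$ is false as stated. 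For example, with $z=1$ take every corrupted column to be miscorrected with $\Hd{\innercw{r}_j,\innercw{\tilde{c}}_j}=t_1$ and let there be $M=\lceil\outerd/2\rceil$ such columns; this is consistent with your classification, gives $\phi(t_1)=2M\geq\outerd$ (so the single attempt fails), yet $\phi(t)=M<\outerd$ for all $t<t_1$, so $\Sigma\geq(t_z+1)\outerd$ does not follow and no contradiction is reached. You recognize this and fall back on a ``careful count'' trading the per-column slack $\innerd-2(t_z+1)$ in the upper bound against the loss $\phi(t_{k+1})-\phi(t)$ at skipped integers, but you never carry it out. Since the slack and the maximal loss per miscorrected column are of the same order, roughly $(\innerd+1)/(2z+1)-1$, whether this bookkeeping closes---with all the floors, for every $z$ and $\innerd$---is precisely the substance of the theorem and of the threshold spacing (\ref{eqn:thresholdsind}), not routine residue. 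As written, your argument only establishes the bound when the realized thresholds cover all integers up to $t_z$ (e.g.\ $\innerd$ even and $z=\innerd/2$), which is essentially Theorem~\ref{theorem:maxboundind} rather than Theorem~\ref{theorem:boundind}; to complete it you must either do the deferred count explicitly or switch to the classical cost-minimization argument (cheapest outer error costs $\innerd-\floor{T_k^{(z)}}$ channel errors, cheapest erasure costs $\floor{T_k^{(z)}}+1$, minimized subject to the $z$ nested failure constraints $2\outererrors+\outererasures\geq\outerd$).

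A secondary remark: the equality $\floor{T_z^{(z)}}+1=z\floor{(\innerd+1)/(2z+1)}$ in (\ref{eqn:boundind}), which you flag but leave aside, is indeed not an identity (e.g.\ $\innerd=7$, $z=2$ gives $3$ on the left and $2$ on the right), so even a completed argument of your type proves only the left-hand form; that, however, is a defect of the transcribed statement rather than of your approach.
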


In Figure \ref{fig:outer_irs_even}, the decoding bound (\ref{eqn:boundind}) is plotted with circles versus the number of thresholds $z$ for an example with $\outerd=33$, $\innerd=20$. It can clearly be seen that the bound reaches $\outerd\innerd/2$, i.e. half the minimum distance of the concatenated code with increasing number of thresholds. The bound obviously only depends on the greatest threshold $T_z^{(z)}$. If we hypothesize that the number of thresholds tends to infinity, we can see that for the greatest threshold
\begin{equation*}
  T_z^{(z)}\underset{z\rightarrow\infty}{\longrightarrow} \frac{\innerd-1}{2}=:T_\infty^{(\infty)}.
\end{equation*}
But as $T_\infty^{(\infty)}=(\innerd-1)/2$, we know that the greatest possible integer threshold $\floor{T_\infty^{(\infty)}}$ is $\innerd/2-1$ if $\innerd$ is even, and $(\innerd-1)/2$ if $\innerd$ is odd. This allows to state the following theorem, which confirms our observation from Figure \ref{fig:outer_irs_even}.

\begin{theorem}\label{theorem:maxboundind}
If the number of thresholds $z$ tends to infinity, the decoding bound of the BZDA for a concatenated code $\concc$ with outer BMD-decoded RS code $\outerc(\outern, \outerk, \outerd)$ and inner BMD-decoded code $\innerc(\innern, \innerk, \innerd)$ is
\begin{equation}\label{eqn:maxboundind}
e < \frac{\outerd\innerd}{2}.
\end{equation}
\end{theorem}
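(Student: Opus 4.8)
The plan is to read the claim off Theorem~\ref{theorem:boundind}: since that theorem already gives the decoding bound $\concerrors < \outerd(\floor{T_z^{(z)}}+1)$ for $z$ thresholds, the whole task is to determine $\floor{T_z^{(z)}}$ in the limit $z\to\infty$. First I would rewrite the largest threshold from (\ref{eqn:thresholdsind}) as
\[
  T_z^{(z)} \;=\; z\cdot\frac{\innerd+1}{2z+1}-1 \;=\; \frac{\innerd-1}{2}-\frac{\innerd+1}{2(2z+1)},
\]
which at once shows that $T_z^{(z)}$ is strictly increasing in $z$, stays strictly below $(\innerd-1)/2$ for every finite $z$, and tends to $(\innerd-1)/2$ as $z\to\infty$ --- exactly the limit $T_\infty^{(\infty)}$ mentioned just before the theorem.

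Next I would convert the strict inequality $T_z^{(z)} < (\innerd-1)/2$ into a bound on the integer $\floor{T_z^{(z)}}+1 = \floor{T_z^{(z)}+1}$ by a short case distinction on the parity of $\innerd$: if $\innerd$ is even, $(\innerd+1)/2$ is a half-integer and $\floor{T_z^{(z)}}+1 < (\innerd+1)/2$ forces $\floor{T_z^{(z)}}+1\le\innerd/2$; if $\innerd$ is odd, $(\innerd+1)/2$ is an integer and the same inequality forces $\floor{T_z^{(z)}}+1\le(\innerd-1)/2<\innerd/2$. In either case $\floor{T_z^{(z)}}+1\le\innerd/2$, and substituting into Theorem~\ref{theorem:boundind} gives $\concerrors<\outerd\innerd/2$, which is (\ref{eqn:maxboundind}). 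To confirm that $\outerd\innerd/2$ is really the value reached in the limit (and not merely an over-estimate), I would, for even $\innerd$, solve $T_z^{(z)}\ge\innerd/2-1$ to get $z\ge\innerd/2$, so that $\floor{T_z^{(z)}}+1=\innerd/2$ for all such $z$ and in particular as $z\to\infty$; for odd $\innerd$ the same computation stabilises $\floor{T_z^{(z)}}+1$ at $(\innerd-1)/2$, so the limiting bound is the slightly smaller $\outerd(\innerd-1)/2$, which still satisfies (\ref{eqn:maxboundind}).

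The argument is short because Theorem~\ref{theorem:boundind} does the real work; the one place that needs care is the floor. Rather than naively passing to the limit in the product form on the right of (\ref{eqn:boundind}) --- where the floor would collapse --- one must track $\floor{T_z^{(z)}}$ through the monotonicity above, and it is the parity of $\innerd$ (even: bound attained at $\outerd\innerd/2$; odd: bound stabilises at $\outerd(\innerd-1)/2$) that pins down the exact limiting value. I expect this parity/floor bookkeeping to be the only mild obstacle.
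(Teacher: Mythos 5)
Your overall route is the same as the paper's: start from Theorem~\ref{theorem:boundind}, analyze the greatest threshold $T_z^{(z)}$ as $z\to\infty$, and split on the parity of $\innerd$; your even case (solving $T_z^{(z)}\ge \innerd/2-1$ to get $z\ge\innerd/2$, so that $\floor{T_z^{(z)}}+1=\innerd/2$) is exactly the computation behind Lemma~\ref{lemma:indeven} and is fine. But there are two genuine problems. First, the step ``$\floor{T_z^{(z)}}+1\le\innerd/2$, substitute into Theorem~\ref{theorem:boundind}, hence $\concerrors<\outerd\innerd/2$'' runs in the wrong direction: Theorem~\ref{theorem:boundind} guarantees success whenever $\concerrors<\outerd(\floor{T_z^{(z)}}+1)$, so an \emph{upper} bound on $\floor{T_z^{(z)}}+1$ only shrinks the guaranteed region. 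To prove the theorem you need the guaranteed region to reach $\outerd\innerd/2$, i.e.\ a lower bound on the achieved threshold, which is what your third step supplies for even $\innerd$ only.

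Second, and more importantly, your odd case establishes only $\concerrors<\outerd(\innerd-1)/2$, and the remark that this ``still satisfies'' (\ref{eqn:maxboundind}) again confuses the direction of the claim: the theorem asserts that \emph{every} error pattern with $\concerrors<\outerd\innerd/2$ is covered, so a guarantee of radius $\outerd(\innerd-1)/2$ is strictly weaker, not a special case. The paper closes this gap by including the limiting threshold $T_\infty^{(\infty)}=(\innerd-1)/2$ itself in the limiting threshold set --- the attempt that erases nothing except inner-decoder failures --- and inserting its integer part into (\ref{eqn:boundind}), which gives $\concerrors<\outerd\big((\innerd-1)/2+1\big)=\outerd\innerd/2+\outerd/2\ge\outerd\innerd/2$. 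That extra attempt is not cosmetic: take $\innerd=3$, $\outerd=2$ and two channel errors in two distinct columns. Every finite-$z$ threshold from (\ref{eqn:thresholdsind}) then has integer part $0$, so both (correctly inner-decoded) columns are erased and $\tau=2\ge\outerd$, i.e.\ every such attempt fails, although $\concerrors=2<\outerd\innerd/2=3$; only the no-erasure attempt with threshold $(\innerd-1)/2=1$ succeeds. So for odd $\innerd$ the bound $\outerd\innerd/2$ cannot be obtained from the supremum of the finite-$z$ floors, as you do; one must argue with the limit threshold included, as the paper's proof does.
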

\begin{proof}
The decoding bound (\ref{eqn:boundind}) is non-decreasing in $z$, hence it assumes its maximum at $z\rightarrow\infty$. Consider two cases:
\begin{enumerate}[(i)]
\item $\innerd$ is even, thus the greatest possible integer threshold is $\innerd/2-1$ and $\concerrors < \outerd(\innerd/2-1+1)=\outerd\innerd/2$.
\item $\innerd$ is odd, hence the greatest possible integer threshold is $(\innerd-1)/2$ and $\concerrors < \outerd\big((\innerd-1)/2+1\big)= \outerd\innerd/2+\outerd/2$.
\end{enumerate}
\end{proof}

In the following, we restrict ourselves to binary error matrices $\mat{E}$ meeting (\ref{eqn:maxboundind}).

To obtain decoding bound (\ref{eqn:maxboundind}), the greatest possible integer threshold needs to be among the threshold set. For even $\innerd$ this greatest integer threshold is $T_{\mathrm{even}}:=\floor{T_\infty^{(\infty)}}=\innerd/2-1$, which is strictly smaller than the limit $T_\infty^{(\infty)}$. By the following lemma it can be reached already for a rather small value of $z$.

\begin{lemma}\label{lemma:indeven}
For a concatenated code with inner BMD-decoded code $\innerc(\innern, \innerk, \innerd)$ and outer BMD-decoded RS code $\outerc$ the greatest possible integer threshold $T_{\mathrm{even}}$ is reached if $z\geq \underline{z}:=\innerd/2$.
\end{lemma}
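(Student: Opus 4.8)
The plan is to work directly with the closed-form expression for the thresholds from Theorem~\ref{theorem:thresholdsind} and determine the smallest $z$ for which the largest threshold $T_z^{(z)}$, after taking floors, equals $T_{\mathrm{even}} = \innerd/2 - 1$. Since by Theorem~\ref{theorem:maxboundind} we already know $T_\infty^{(\infty)} = (\innerd-1)/2$ and $\floor{T_\infty^{(\infty)}} = \innerd/2 - 1$ in the even case, and since the sequence $T_z^{(z)}$ is monotone and converges to this limit from below, the task reduces to finding the threshold on $z$ past which $T_z^{(z)} \geq \innerd/2 - 1$, i.e.\ past which the floor has already attained its limiting value.

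First I would substitute $k = z$ into \eqref{eqn:thresholdsind} to get $T_z^{(z)} = z\cdot\frac{\innerd+1}{2z+1} - 1$. The condition that the greatest integer threshold $T_{\mathrm{even}} = \innerd/2 - 1$ is reached is $\floor{T_z^{(z)}} \geq \innerd/2 - 1$, which (because $T_z^{(z)} < (\innerd-1)/2 < \innerd/2$) is equivalent to $T_z^{(z)} \geq \innerd/2 - 1$, hence to $z\cdot\frac{\innerd+1}{2z+1} \geq \innerd/2$. Clearing denominators gives $2z(\innerd+1) \geq \innerd(2z+1)$, i.e.\ $2z\innerd + 2z \geq 2z\innerd + \innerd$, which collapses to $2z \geq \innerd$, that is $z \geq \innerd/2$. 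Setting $\underline{z} := \innerd/2$ (an integer since $\innerd$ is even) then yields exactly the claimed bound, and monotonicity of $T_z^{(z)}$ in $z$ ensures the threshold remains reached for all larger $z$.

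I do not anticipate a genuine obstacle here; the only point requiring a little care is justifying that $\floor{T_z^{(z)}} \geq \innerd/2 - 1$ is equivalent to the non-floored inequality $T_z^{(z)} \geq \innerd/2 - 1$. This follows because $T_z^{(z)}$ is strictly increasing in $z$ with supremum $(\innerd-1)/2$, so $T_z^{(z)} < (\innerd-1)/2 \leq \innerd/2 - 1/2$ for all finite $z$; combined with $T_z^{(z)} \geq \innerd/2 - 1$ this pins $T_z^{(z)}$ into the half-open interval $[\innerd/2 - 1,\ \innerd/2 - 1/2)$, on which the floor is constantly $\innerd/2 - 1$. One should also note that for $z < \innerd/2$ the strict reverse of the chain above gives $T_z^{(z)} < \innerd/2 - 1$, hence $\floor{T_z^{(z)}} \leq \innerd/2 - 2$, confirming that $\underline{z} = \innerd/2$ is not merely sufficient but is the exact point at which $T_{\mathrm{even}}$ first becomes reachable.
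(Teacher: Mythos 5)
Your proposal is correct and follows essentially the same route as the paper, whose entire proof is ``Solve $T_z^{(z)}\geq T_{\mathrm{even}}$ for $z$''; you simply carry out that algebra explicitly (obtaining $2z\geq\innerd$) and add the routine justification that, since $T_{\mathrm{even}}=\innerd/2-1$ is an integer, the floored and unfloored inequalities coincide.
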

\begin{proof}
Solve $T_z^{(z)}\geq T_{\mathrm{even}}$ for $z$.
\end{proof}

We can thus obtain (\ref{eqn:maxboundind}) with only $\innerd/2$ thresholds according to (\ref{eqn:thresholdsind}) if $\innerd$ is even.

If however $\innerd$ is odd, the greatest possible integer threshold is $T_{\mathrm{odd}}:=T_\infty^{(\infty)}=(\innerd-1)/2$, i.e. the limit $T_\infty^{(\infty)}$ itself. It can obviously only be reached for an infinte number of thresholds.

But the number of integers below $T_{\mathrm{odd}}$ is $(\innerd-1)/2$, hence the number of actual decoding attempts is upper bounded by $(\innerd-1)/2$. It follows that even though in the $\innerd$ odd case the number of required thresholds is infinite, only $z^\star=(\innerd-1)/2$ outer decoding attempts are sufficient to achieve decoding bound (\ref{eqn:maxboundind}).

Up to now, we only know that the transmitted outer code word $\outercw{c}$ is {\em somewhere} within the result list $\mathcal{L}$ of the BZDA if (\ref{eqn:maxboundind}) is fulfilled. The following lemma provides a means of exactly determining its position among the elements of $\mathcal{L}$.

\begin{lemma}[Blokh, Zyablov \cite{blokh_zyablov:1982}]\label{lemma:selection}
Let $t(k):=\sum_{j=0}^{\outern-1}t_j(k)$ with
\begin{equation*}
  t_j(k):=\left\{\begin{array}{ll}
    \Delta_j,            & \text{if}\; \tilde{c}^\mathrm{o}_j(k)=\mathrm{enc}_{\innerc}^{-1}(\innercw{\tilde{c}}_j)\\
    \innerd-\Delta_j,    & \text{if}\; \tilde{c}^\mathrm{o}_j(k)\neq \mathrm{enc}_{\innerc}^{-1}(\innercw{\tilde{c}}_j)\\
    \frac{\innerd}{2}, & \text{failure of the inner decoder},
                \end{array}\right.
\end{equation*}
and $\Delta_j:=\Hd{\innercw{r}_j, \innercw{\tilde{c}}_j}$. Assume $e<\outerd\innerd/2$ and that $T_{k_0}$ is a threshold with $\outercw{\tilde{c}}(k_0)=\outercw{c}$. Then
\begin{equation}\label{eqn:selection}
  t(k_0) < \frac{\outerd\innerd}{2},
\end{equation}
and
\begin{equation*}
  \forall\, k\in\{1,\ldots, |\mathcal{L}|\},\;k\neq k_0:t(k) > \frac{\outerd\innerd}{2}.
\end{equation*}
\end{lemma}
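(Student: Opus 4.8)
The plan is to read $t(k)$ as a lower bound for the Hamming distance between the received matrix $\mat{R}$ and the concatenated code word $\mat{C}(k)$ obtained by re-encoding the outer candidate $\outercw{\tilde{c}}(k)$, and then to compare a wrong candidate $k$ against the correct index $k_0$ column by column. Write $\innercw{c}_j(k):=\mathrm{enc}_{\innerc}(\tilde{c}^{\mathrm{o}}_j(k))$ for the inner code word carrying the information symbol $\tilde{c}^{\mathrm{o}}_j(k)$, i.e.\ the $j$-th column of $\mat{C}(k)$. The first step is to show $\Hd{\innercw{r}_j,\innercw{c}_j(k)}\geq t_j(k)$ for every column $j$, matching the three cases of $t_j(k)$: if the inner decoder succeeds at $j$ and $\tilde{c}^{\mathrm{o}}_j(k)=\mathrm{enc}_{\innerc}^{-1}(\innercw{\tilde{c}}_j)$, then $\innercw{c}_j(k)=\innercw{\tilde{c}}_j$ and the distance equals $\Delta_j$; if the inner decoder succeeds but $\tilde{c}^{\mathrm{o}}_j(k)\neq \mathrm{enc}_{\innerc}^{-1}(\innercw{\tilde{c}}_j)$, then $\innercw{c}_j(k)$ and $\innercw{\tilde{c}}_j$ are distinct code words of $\innerc$, so $\Hd{\innercw{c}_j(k),\innercw{\tilde{c}}_j}\geq \innerd$ and the triangle inequality with $\Hd{\innercw{r}_j,\innercw{\tilde{c}}_j}=\Delta_j$ gives $\Hd{\innercw{r}_j,\innercw{c}_j(k)}\geq \innerd-\Delta_j$; and if the inner decoder failed at $j$, then $\innercw{r}_j$ lies at distance more than $\floor{(\innerd-1)/2}$ from every inner code word, hence $\Hd{\innercw{r}_j,\innercw{c}_j(k)}\geq \floor{(\innerd+1)/2}\geq \innerd/2$. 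Summing over $j$ yields $\Hd{\mat{R},\mat{C}(k)}\geq t(k)$.

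Inequality (\ref{eqn:selection}) then follows at once: with $k=k_0$ we have $\outercw{\tilde{c}}(k_0)=\outercw{c}$, so $\mat{C}(k_0)=\mat{C}$ is the transmitted code word and $t(k_0)\leq \Hd{\mat{R},\mat{C}}=\concerrors<\outerd\innerd/2$.

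For the second statement, fix $k\neq k_0$, so $\outercw{\tilde{c}}(k)\neq\outercw{c}$; since both are code words of the outer RS code $\outerc$, the set $D$ of columns where they disagree has $|D|\geq\outerd$. The plan is to bound $t(k)$ below through $t(k)+t(k_0)$, by checking that $t_j(k)+t_j(k_0)\geq\innerd$ for every $j\in D$. If the inner decoder failed at $j$, both summands equal $\innerd/2$. If it succeeded, let $s_j:=\mathrm{enc}_{\innerc}^{-1}(\innercw{\tilde{c}}_j)$; since $\tilde{c}^{\mathrm{o}}_j(k)\neq c^{\mathrm{o}}_j$, at most one of these two symbols equals $s_j$, so either exactly one does and one summand is $\Delta_j$ while the other is $\innerd-\Delta_j$ (total $\innerd$), or neither does and both summands equal $\innerd-\Delta_j$, which is strictly larger than $\innerd/2$ because BMD inner decoding forces $\Delta_j\leq\floor{(\innerd-1)/2}$ (total exceeding $\innerd$). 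Discarding the non-negative contributions of the columns $j\notin D$ gives $t(k)+t(k_0)\geq|D|\innerd\geq\outerd\innerd$, and combining with $t(k_0)<\outerd\innerd/2$ from the previous paragraph yields $t(k)>\outerd\innerd/2$.

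The algebra here is routine; the step that needs care is the inner-decoder-failure case, where the parity of $\innerd$ enters through the floor functions, together with the observation that $\Delta_j\leq\floor{(\innerd-1)/2}<\innerd/2$ whenever the inner decoder succeeds --- this strict gap is exactly what makes the sub-case where neither symbol equals $s_j$ contribute strictly more than $\innerd$ rather than exactly $\innerd$. A minor caveat is that $\mathcal{L}$ may list the same candidate more than once, so the statement is best read as: every index $k$ with $\outercw{\tilde{c}}(k)\neq\outercw{c}$ satisfies $t(k)>\outerd\innerd/2$, which is what the argument establishes.
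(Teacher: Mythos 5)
Your proof is correct. Note that the paper itself gives no proof of this lemma --- it is stated with attribution to Blokh and Zyablov \cite{blokh_zyablov:1982} --- so there is nothing in the text to compare against; your argument is the standard one and it holds up: the column-wise bound $t_j(k)\leq \Hd{\innercw{r}_j,\innercw{c}_j(k)}$ gives $t(k_0)\leq \concerrors<\outerd\innerd/2$, and the pairing argument $t_j(k)+t_j(k_0)\geq\innerd$ on the at least $\outerd$ positions where a wrong outer candidate differs from $\outercw{c}$ gives $t(k)\geq\outerd\innerd-t(k_0)>\outerd\innerd/2$, with all three cases (agreement with the inner estimate, disagreement, inner failure) handled correctly, including the strict inequality $\Delta_j\leq\floor{(\innerd-1)/2}<\innerd/2$ in the sub-case where neither candidate symbol matches the inner decoding result. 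Your closing caveat is also apt: since $t(k)$ depends only on the candidate word $\outercw{\tilde{c}}(k)$, the second assertion can only hold in the form ``every $k$ with $\outercw{\tilde{c}}(k)\neq\outercw{c}$ satisfies $t(k)>\outerd\innerd/2$,'' i.e.\ $\mathcal{L}$ must be read as a list of distinct candidates; this is exactly what your argument proves and what the selection rule in the paper needs.
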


The lemma guarantees that only the transmitted outer code word $\outercw{c}=\outercw{\tilde{c}}(k_0)$ fulfills (\ref{eqn:selection}), i.e. that no further decoding attempts have to be executed as soon as (\ref{eqn:selection}) is fulfilled for the smallest threshold index $k\in\{1, \ldots, z\}$. Then, we set $k_0:=k$ and choose $\outercw{\bar{c}}=\outercw{\tilde{c}}(k_0)$.

\section{BZDA with Outer IRS Codes, i.e. Outer BD Decoding}\label{section:BZouterIRS}

Now we consider the case where $\outerc$ is an IRS code, i.e. a row-wise arrangement of $\ell\geq 2$ RS codes of equal length but potentially different dimensions, which are decoded collaboratively as described in Section \ref{section:IRS}. This allows $\outerc$ to correct a larger number of errors leading to a decoding success whilst $\lambda\outererrors+\outererasures\leq \outerd-1$,
where $1<\lambda:=(\ell+1)/\ell<2$. This means Bounded Distance (BD) decoding. Our aim now is to derive formulae corresponding to (\ref{eqn:thresholdsind}) and (\ref{eqn:boundind}) for this specific case. In doing so, we generalize the approach for outer BMD decoding from \cite{blokh_zyablov:1982}. The procedure is as follows: Let $\concerrorsfail$ be the smallest number of channel errors for a given set of thresholds $\{T_1^{(z)},\ldots, T_z^{(z)}\}$, such that all decoding attempts $k\in\{1,\ldots, z\}$ fail, i.e. such that
\begin{equation}\label{eqn:errorforall}
  \forall\, k\in \{1,\ldots, z\}:\lambda\outererrors+\outererasures>\outerd-1.
\end{equation}
We determine
\begin{equation*}
  \concerrorsfail := \underset{(\outererrorsexplicit{1},\outererasuresexplicit{1},\ldots, \outererrorsexplicit{z}, \outererasuresexplicit{z})}{\min}\{\concerrors\}
\end{equation*}
under the condition that (\ref{eqn:errorforall}) is fulfilled. Then, we find the set $\{T_1^{(z)},\ldots, T_z^{(z)}\}$ of thresholds which maximizes this mini\-mum, i.e. the set of thresholds, which maximizes the decoding bound. This set is determined by the expression
\begin{equation*}
  \{T_1^{(z)}, \ldots, T_z^{(z)}\}:=\underset{\{\bar{T}_1^{(z)}, \ldots, \bar{T}_z^{(z)}\}}{\arg\max}\{\concerrorsfail\}.
\end{equation*}

The detailed derivation is too involved to be presented here, so we confine ourselves to the results in form of the following theorems.

\begin{theorem}\label{theorem:thresholds}
For a concatenated code $\concc$ with outer collaboratively decoded IRS code $\outerc$ consisting of $\ell$ RS codes and inner BMD-decoded code $\innerc(\innern, \innerk, \innerd)$, the set of thresholds $\{T_1^{(z)}, \ldots, T_z^{(z)}\}$ which maximizes the decoding bound is defined by
\begin{equation}\label{eqn:thresholdscol}
  T_k^{(z)} := b-a(\lambda-1)^k
\end{equation}
with
\begin{equation*}
  b := \frac{\innerd-1+\lambda(\lambda-1)^z}{2-\lambda(\lambda-1)^z},\quad
  a := \frac{\innerd+1}{2-\lambda(\lambda-1)^z},
\end{equation*}
$k\in\{1,\ldots, z\}$, where $z$ is the number of thresholds and $1<\lambda=(\ell+1)/\ell<2$.
\end{theorem}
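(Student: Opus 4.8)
The plan is to follow the same optimization scheme that underlies Theorem~\ref{theorem:thresholdsind}, but with the BMD success condition $2e+\tau<\outerd$ replaced by the BD success condition $\lambda e + \tau \leq \outerd-1$. First I would set up the worst-case error configuration: for a fixed threshold set $\{T_1^{(z)},\ldots,T_z^{(z)}\}$, the adversary wants to place the minimum total number $\concerrorsfail$ of channel errors so that \emph{every} attempt $k$ fails, i.e. $\lambda\outererrorsexplicit{k}+\outererasuresexplicit{k}>\outerd-1$ for all $k$. As in \cite{blokh_zyablov:1982}, the key combinatorial observation is how a column contributes to $\outererrorsexplicit{k}$ versus $\outererasuresexplicit{k}$ as a function of its inner Hamming distance $\Delta_j$ and the threshold $T_k$: a column with a correct inner decoding but $\Delta_j>T_k$ becomes an erasure, a column with an incorrect inner decoding and $\Delta_j\leq T_k$ becomes an error, and the error weight it costs the channel is governed by $\innerd-\Delta_j$ (as in Lemma~\ref{lemma:selection}). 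Minimizing $\concerrors=\sum_j$ (column error weight) subject to the simultaneous-failure constraint across all $z$ attempts is a linear program whose extreme point gives $\concerrorsfail$ as an explicit function of the $T_k$.

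Next I would carry out the outer maximization $\arg\max_{\{T_k\}}\concerrorsfail$. The structure that makes this tractable is the same telescoping/recursive structure seen in the BMD case: the contribution of the "layer" of columns between consecutive thresholds $T_{k-1}$ and $T_k$ enters the failure bound with a multiplicative weight coming from $\lambda$ (in the BMD case $\lambda=2$ and one recovers the arithmetic progression \eqref{eqn:thresholdsind}; here $1<\lambda<2$ and the recursion becomes geometric). Setting up the optimality (equal-slack) conditions — each attempt should fail "by the same margin," so that no threshold can be moved to improve the bound — yields a linear recurrence $T_k - b = (\lambda-1)(T_{k-1}-b)$ for the shifted thresholds, together with boundary conditions at $k=0$ (linking $T_1$ to the inner-decoding radius, effectively $T_0=0$ type condition) and at $k=z$ (linking $T_z$ to the limiting threshold $(\innerd-1)/2$ via the BD slack $\lambda$). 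Solving the recurrence gives $T_k^{(z)}=b-a(\lambda-1)^k$; imposing the two boundary conditions pins down $a$ and $b$ to the stated closed forms, where the denominator $2-\lambda(\lambda-1)^z$ and the numerators $\innerd-1+\lambda(\lambda-1)^z$ and $\innerd+1$ fall out of eliminating the two constants.

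The main obstacle I expect is the first step: correctly identifying the worst-case joint error configuration over all $z$ attempts simultaneously and proving it is extremal. In the BMD case this is already the delicate part of \cite{blokh_zyablov:1982}; with BD decoding the asymmetry between the error coefficient $\lambda$ and the erasure coefficient $1$ changes the trade-off, so one must re-verify that the adversary's optimal strategy still consists of "stacking" columns at the threshold boundaries in the layered fashion, rather than some mixed configuration. Once the correct $\concerrorsfail(\{T_k\})$ is in hand, the recurrence and its solution are essentially routine algebra; the only care needed is checking that the resulting thresholds are admissible, i.e. that they are increasing in $k$ and satisfy $0\leq T_1^{(z)}<\cdots<T_z^{(z)}\leq(\innerd-1)/2$, which follows because $0<\lambda-1<1$ makes $(\lambda-1)^k$ decreasing and the boundary conditions keep everything in range. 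The companion decoding bound (the analogue of \eqref{eqn:boundind}) then comes by substituting $T_z^{(z)}$ back into $\concerrorsfail$.
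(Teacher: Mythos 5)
The paper itself does not prove Theorem~\ref{theorem:thresholds}: it only states the min--max formulation (minimize $\concerrors$ subject to $\lambda\outererrorsexplicit{k}+\outererasuresexplicit{k}>\outerd-1$ for all $k$, then maximize $\concerrorsfail$ over the threshold set) and explicitly declares the detailed derivation too involved to present. Your plan reproduces that setup faithfully, but it stops short of a proof exactly at the decisive points. The heart of the argument is an explicit expression for $\concerrorsfail$ as a function of $\{T_1^{(z)},\ldots,T_z^{(z)}\}$, which requires (a) writing down the cheapest joint configuration forcing all $z$ attempts to fail simultaneously --- essentially columns with wrong inner decodings at $\Delta_j=\floor{T_k}$ (channel cost $\geq\innerd-\floor{T_k}$ each, errors for attempts $\geq k$, erasures below) mixed with correctly decoded columns at $\Delta_j=\floor{T_k}+1$ (cost $\floor{T_k}+1$ each, erasures for attempts $\leq k$) --- and (b) proving this layered configuration is extremal under the asymmetric weights ($\lambda$ on errors, $1$ on erasures). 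You yourself flag (b) as the main obstacle and leave it open, so nothing in the proposal actually yields $\concerrorsfail(\{T_k\})$, and without it the subsequent optimization has no objective function to work with.

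The second half is likewise asserted rather than derived. The recurrence $T_k-b=(\lambda-1)(T_{k-1}-b)$ is reverse-engineered from the shape of (\ref{eqn:thresholdscol}) rather than obtained from stationarity/equal-slack conditions of the minimization, and the two boundary equations that actually determine $a$ and $b$ are not stated: note the closed form forces $T_0^{(z)}=b-a=-1$ (consistent with (\ref{eqn:thresholdsind}), which also gives $T_0=-1$), not a ``$T_0=0$ type condition''; and the top-end condition coupling $T_z^{(z)}$ to $\innerd$ through $\lambda$ --- the source of the quantities $2-\lambda(\lambda-1)^z$ and $\innerd-1+\lambda(\lambda-1)^z$ --- is left as a vague remark. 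Also, your claim that putting $\lambda=2$ into the shifted recurrence recovers the arithmetic thresholds of the BMD case is only true in a degenerate limit: at $\lambda=2$ the fixed point $b$ (and $a$) diverges, so the fixed-point form does not specialize directly to (\ref{eqn:thresholdsind}); the correct statement is that the inhomogeneous recurrence $T_k=(\lambda-1)T_{k-1}+\mathrm{const}$ becomes arithmetic at $\lambda=2$. In short: the framework matches the paper's stated setup, but as it stands this is a programme, not a proof --- the extremality argument and the explicit boundary conditions pinning down $a$ and $b$ are missing.
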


\begin{theorem}\label{theorem:boundcol}
For a concatenated code $\concc$ with outer collaboratively decoded IRS code $\outerc(\outern, \outerk, \outerd)$ consisting of $\ell$ RS codes and $z$ thresholds chosen as in (\ref{eqn:thresholdscol}), the decoding bound is given by
\begin{equation}\label{eqn:boundcol}
  e< \outerd(\floor{T_z^{(z)}}+1).
\end{equation}
\end{theorem}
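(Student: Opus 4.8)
The plan is to reproduce the argument behind Theorem~\ref{theorem:boundind} in the BD setting, replacing the outer success condition $2\outererrors+\outererasures<\outerd$ by $\lambda\outererrors+\outererasures\leq\outerd-1$. Concretely, I would show that for the thresholds (\ref{eqn:thresholdscol}) the quantity $\concerrorsfail$ introduced just before Theorem~\ref{theorem:thresholds}---the smallest number of channel errors that forces all $z$ attempts to fail---equals $\outerd(\floor{T_z^{(z)}}+1)$. Since $\concerrors<\concerrorsfail$ then makes it impossible for all attempts to fail, at least one outer IRS decoding succeeds and the transmitted word lies in the decoding list, which is exactly the claim $\concerrors<\outerd(\floor{T_z^{(z)}}+1)$.

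The first ingredient is a column-by-column count of channel errors in terms of the inner decoding outcome and the integer part $\floor{T_k^{(z)}}$ of the $k$-th threshold. Any column that yields an erasure at attempt $k$---whether by a correct inner decision with $\Hd{\innercw{r}_j,\innercw{\tilde{c}}_j}>\floor{T_k^{(z)}}$, by a wrong inner decision, or by an inner decoding failure---carries at least $\floor{T_k^{(z)}}+1$ channel errors (for the wrong-decision and failure sub-cases one uses $\innerd-\floor{(\innerd-1)/2}\geq\floor{T_k^{(z)}}+1$, which holds since $T_z^{(z)}<(\innerd-1)/2$); a wrongly decoded column that is \emph{not} erased carries at least $\innerd-\floor{T_k^{(z)}}$ errors. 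Summing over the $\outern$ columns gives, for every $k$, the per-attempt inequality $\concerrors\geq(\innerd-\floor{T_k^{(z)}})\outererrorsexplicit{k}+(\floor{T_k^{(z)}}+1)\outererasuresexplicit{k}$---the same count as in \cite{blokh_zyablov:1982}, now carried along with a general $\lambda$ rather than $\lambda=2$.

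The second ingredient is the minimization of $\concerrors$ over all configurations $(\outererrorsexplicit{1},\outererasuresexplicit{1},\ldots,\outererrorsexplicit{z},\outererasuresexplicit{z})$ that meet the all-fail condition $\lambda\outererrorsexplicit{k}+\outererasuresexplicit{k}>\outerd-1$. Here one exploits that raising the threshold can only turn erased columns into decoded ones, so $\outererasuresexplicit{1}\geq\cdots\geq\outererasuresexplicit{z}$ and $\outererrorsexplicit{1}\leq\cdots\leq\outererrorsexplicit{z}$, the increments of $\outererrorsexplicit{k}$ being supplied by columns that leave the erased set between attempts $k-1$ and $k$. Counting those switching columns and inserting the cost figures above turns the minimization into a one-step recursion linking the worst-case error counts of attempts $k-1$ and $k$; solving that recursion is exactly what forces the geometric threshold law $T_k^{(z)}=b-a(\lambda-1)^k$ of Theorem~\ref{theorem:thresholds}, and evaluating it at $k=z$ yields $\concerrorsfail=\outerd(\floor{T_z^{(z)}}+1)$. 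Exhibiting one error pattern of that weight which defeats all $z$ attempts then shows the bound cannot be improved.

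The hard part is this last minimization: keeping the bookkeeping straight when the slope $\lambda=(\ell+1)/\ell$ is not an integer, when Hamming distances and the numbers $\floor{T_k^{(z)}}$ are integers while the optimal thresholds (\ref{eqn:thresholdscol}) are not, and when the erased/decoded sets across the $z$ attempts are nested. The delicate step is to verify that the recursion extracted from the BD condition is precisely the one whose solution is the geometric sequence of Theorem~\ref{theorem:thresholds}, so that Theorems~\ref{theorem:thresholds} and~\ref{theorem:boundcol} fit together; once the recursion is set up, collecting the floor terms is routine.
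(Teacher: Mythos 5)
Your strategy coincides with the one the paper itself sets up just before Theorem~\ref{theorem:thresholds}: define $\concerrorsfail$ as the smallest number of channel errors under which every attempt $k$ violates $\lambda\outererrors+\outererasures\leq\outerd-1$, minimize over the error/erasure configurations, then choose the thresholds that maximize this minimum. Your per-column accounting is also sound: an erased position at attempt $k$ costs at least $\floor{T_k^{(z)}}+1$ channel errors, a wrongly decoded unerased position at least $\innerd-\floor{T_k^{(z)}}$, which gives $\concerrors\geq(\innerd-\floor{T_k^{(z)}})\outererrorsexplicit{k}+(\floor{T_k^{(z)}}+1)\outererasuresexplicit{k}$ simultaneously for all $k$, and the nestedness of the erased/decoded sets across attempts is the right structural input. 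Be aware, though, that the paper provides no detailed proof of Theorem~\ref{theorem:boundcol} at all --- it explicitly states that the derivation is ``too involved to be presented here'' --- so your outline parallels the paper's own sketch rather than an argument you could be checked against line by line.

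As a proof, however, your proposal has a genuine gap: the step that carries all of the theorem's content is announced but not performed. You do not set up or solve the recursion linking the worst-case configurations of consecutive attempts, you do not show that its solution with non-integer slope $\lambda=(\ell+1)/\ell$ and integer counts $\outererrorsexplicit{k}$, $\outererasuresexplicit{k}$ (against non-integer thresholds (\ref{eqn:thresholdscol})) is exactly the geometric law $T_k^{(z)}=b-a(\lambda-1)^k$, and you do not verify that at the optimum the minimum collapses to precisely $\outerd(\floor{T_z^{(z)}}+1)$ rather than an expression involving the intermediate thresholds --- you concede this yourself (``the hard part is this last minimization''). This is not routine bookkeeping: a single-attempt estimate using only $k=z$ fails, because the coefficient of $\outererrorsexplicit{z}$ it produces, $\innerd-(1+\lambda)\floor{T_z^{(z)}}-\lambda$, is negative once $\floor{T_z^{(z)}}$ approaches $(\innerd-1)/2$ (e.g.\ for $\ell=2$ and $\innerd>4$), so the earlier attempts must genuinely be used to bound $\outererrorsexplicit{z}$, and the interplay of those constraints is exactly what must be worked out. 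Likewise, the defeating error pattern of weight $\outerd(\floor{T_z^{(z)}}+1)$ that certifies tightness is asserted, not exhibited. Until that analysis is carried through, Theorem~\ref{theorem:boundcol} is assumed rather than proved.
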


By Theorem \ref{theorem:boundcol} the decoding bound only depends on threshold $T_z^{(z)}$, the greatest one among the ordered threshold set $\{T_1^{(z)}, \ldots, T_z^{(z)}\}$. Hence to maximize the decoding bound (\ref{eqn:boundcol}) we have to maximize $T_z^{(z)}$. Since the threshold location function (\ref{eqn:thresholdscol}) is non-decreasing, the greatest threshold occurs for $z\rightarrow\infty$, and is $T_\infty^{(\infty)}:=(\innerd-1)/2$. The following theorem states the decoding bound for this greatest possible threshold.

\begin{theorem}\label{theorem:IRShd}
Let $\concc$ be a concatenated code with inner BMD-decoded code $\innerc(\innern, \innerk, \innerd)$ and outer IRS code $\outerc$ with $\ell>2$. If the maximum possible integer threshold is among the threshold set, the decoding bound is given by
\begin{equation*}
\concerrors< \frac{\outerd\innerd}{2}.
\end{equation*}
\end{theorem}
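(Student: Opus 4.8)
The plan is to imitate the proof of Theorem~\ref{theorem:maxboundind}, with Theorem~\ref{theorem:boundind} replaced by its collaborative-decoding counterpart Theorem~\ref{theorem:boundcol}. By the latter the decoding bound equals $\outerd(\floor{T_z^{(z)}}+1)$, so the whole statement reduces to controlling the integer $\floor{T_z^{(z)}}$. As already observed just before the theorem, the threshold location function (\ref{eqn:thresholdscol}) is non-decreasing in $z$ and $T_z^{(z)}\to T_\infty^{(\infty)}=(\innerd-1)/2$ as $z\to\infty$; the hypothesis that the maximum possible integer threshold lies in the threshold set is precisely the statement that $z$ is large enough for $\floor{T_z^{(z)}}$ to have reached its supremal value, so it suffices to bound $\floor{T_z^{(z)}}$ using the inequality $T_z^{(z)}<(\innerd-1)/2$.

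The only real computation is to verify that this last inequality is strict for every finite $z$. Substituting the expressions for $a$ and $b$ from Theorem~\ref{theorem:thresholds} into (\ref{eqn:thresholdscol}), the denominator $2-\lambda(\lambda-1)^z$ is positive because $\lambda<2$ and $0<\lambda-1<1$; clearing it and cancelling the common positive factor $(\lambda-1)^z$, the inequality $T_z^{(z)}<(\innerd-1)/2$ collapses to $(\innerd+1)(\lambda-2)<0$, which holds since $1<\lambda=(\ell+1)/\ell<2$ (this is the only place where the restriction on $\ell$ is used). I expect this little manipulation to be the main — and essentially the only — obstacle; everything else is bookkeeping with floors.

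It then remains to split on the parity of $\innerd$, exactly as in Theorem~\ref{theorem:maxboundind}. If $\innerd$ is even, $(\innerd-1)/2$ is a half-integer, so $\floor{T_z^{(z)}}\le\innerd/2-1$ and hence $\concerrors<\outerd(\innerd/2-1+1)=\outerd\innerd/2$. If $\innerd$ is odd, $(\innerd-1)/2$ is itself an integer, so the strict inequality forces $\floor{T_z^{(z)}}\le(\innerd-1)/2-1=(\innerd-3)/2$, whence $\concerrors<\outerd((\innerd-3)/2+1)=\outerd(\innerd-1)/2<\outerd\innerd/2$. In both cases $\concerrors<\outerd\innerd/2$, as claimed. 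Worth noting is the contrast with Theorem~\ref{theorem:maxboundind}(ii): there the ad-hoc choice $T_z=(\innerd-1)/2$ was admissible and produced the larger bound $\outerd\innerd/2+\outerd/2$, whereas here the prescribed formula (\ref{eqn:thresholdscol}) never attains $(\innerd-1)/2$, so the odd case is automatically subsumed.
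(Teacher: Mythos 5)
Your algebra showing $T_z^{(z)}<(\innerd-1)/2$ for every finite $z$ is correct, and your even-$\innerd$ case goes through essentially as in the paper: the hypothesis forces the integer part of the greatest threshold to equal $\innerd/2-1$, and (\ref{eqn:boundcol}) then gives $\concerrors<\outerd\innerd/2$. The odd-$\innerd$ case, however, contains a genuine direction error. The decoding bound is a guarantee threshold: Theorem \ref{theorem:boundcol} asserts that decoding succeeds whenever $\concerrors<\outerd(\floor{T_z^{(z)}}+1)$, so to prove the stated claim you must show the right-hand side is \emph{at least} $\outerd\innerd/2$, i.e.\ that the greatest integer part among the thresholds is at least $\innerd/2-1$, which for odd $\innerd$ means it must equal $(\innerd-1)/2$. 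Your argument yields only $\floor{T_z^{(z)}}\le(\innerd-3)/2$ for every finite $z$, hence a guaranteed region $\concerrors<\outerd(\innerd-1)/2$; the concluding step ``$\outerd(\innerd-1)/2<\outerd\innerd/2$, so in both cases $\concerrors<\outerd\innerd/2$'' inverts the logic, since proving success on a strictly smaller region does not establish success on the larger claimed region. Under your reading of the hypothesis the theorem would in fact fall short of its statement by $\outerd/2$ whenever $\innerd$ is odd.

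The missing ingredient is the paper's handling of odd $\innerd$, spelled out immediately after the theorem: because the threshold formula (\ref{eqn:thresholdscol}) never attains $(\innerd-1)/2$ for finite $z$, the paper adjoins the limit threshold $T_\infty^{(\infty)}=(\innerd-1)/2$ itself to the threshold set (the set $\mathcal{T}_\mathrm{odd}$ in (\ref{eqn:Tsetodd})). That is precisely what the hypothesis ``the maximum possible integer threshold is among the threshold set'' means; it is not the statement that $z$ is merely taken large enough, since no finite $z$ suffices when $\innerd$ is odd. With $T_\mathrm{odd}=(\innerd-1)/2$ in the set, inserting it into (\ref{eqn:boundcol}) gives $\concerrors<\outerd\big((\innerd-1)/2+1\big)=\outerd\innerd/2+\outerd/2$, which covers the claimed region $\concerrors<\outerd\innerd/2$ exactly as in case (ii) of Theorem \ref{theorem:maxboundind} — the odd case is handled by this explicit extra threshold, not ``automatically subsumed'' by the strictness of the limit.
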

\begin{proof}
Inserting the the integer parts $T_\mathrm{even}:=\floor{T_\infty^{(\infty)}}=\innerd/2-1$ and $T_\mathrm{odd}:=T_\infty^{(\infty)}=(\innerd-1)/2$, respectively, of the greatest possible thresholds into bound (\ref{eqn:boundcol}) proves the statement.
\end{proof}

For even $\innerd$, the greatest possible integer threshold $T_\mathrm{even}$ already is reached considering a finite number of thresholds, i.e. if
$z\geq \underline{z}:=\min\{z\}\;\mathrm{s.t}\;\floor{T_z^{(z)}}=T_\mathrm{even}=\innerd/2-1$. Thus for even $\innerd$ the finite threshold set
\begin{equation}\label{eqn:Tseteven}
\mathcal{T}_\mathrm{even}:=\{T_1^{(\underline{z})}, \ldots, T_{\underline{z}}^{(\underline{z})}\}
\end{equation}
is sufficient to obtain the maximum of (\ref{eqn:boundcol}).

If on the other hand $\innerd$ is odd, the greatest possible integer threshold $T_\mathrm{odd}$ is $T_\infty^{(\infty)}$ itself, hence the number of required thresholds in fact is infinite. But since we know that decoding attempts corresponding to thresholds with equal integer parts coincide, we can skip all thresholds within the interval $(T_\infty^{(\infty)}-1, T_\infty^{(\infty)})$ by the following lemma.

\begin{lemma}\label{lemma:neededthresholds}
For a concatenated code with inner BMD-decoded code $\innerc(\innern, \innerk, \innerd)$ and and outer IRS code $\outerc$ with $\ell$ collaboratively decoded RS codes $T_\infty^{(\infty)}-1=(\innerd-1)/2-1$ is reached if $k\geq \underline{k}:=
\bigceiling{\log_\ell(\innerd+1)}$.
\end{lemma}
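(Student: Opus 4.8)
The plan is to run the same one-line argument that proves the earlier Lemmas, namely to solve the inequality ``$k$-th threshold $\geq T_\infty^{(\infty)}-1$'' for $k$, starting from the closed form for the optimized thresholds in Theorem~\ref{theorem:thresholds}. In the odd-$\innerd$ case the number of thresholds is infinite, so the relevant quantity is $T_k^{(\infty)}:=\lim_{z\to\infty}T_k^{(z)}$. Since $1<\lambda=(\ell+1)/\ell<2$ we have $0<\lambda-1=1/\ell<1$, hence $\lambda(\lambda-1)^z\to 0$ as $z\to\infty$; therefore $b\to(\innerd-1)/2$ and $a\to(\innerd+1)/2$ in (\ref{eqn:thresholdscol}), so that
\begin{equation*}
  T_k^{(\infty)}=\frac{\innerd-1}{2}-\frac{\innerd+1}{2}\,(\lambda-1)^k=\frac{\innerd-1}{2}-\frac{\innerd+1}{2\,\ell^{k}},
\end{equation*}
where the second equality uses $\lambda-1=1/\ell$, which is precisely what makes the logarithm in the statement come out base $\ell$. (If one prefers finite $z$, one first checks from (\ref{eqn:thresholdscol}) that $T_k^{(z)}$ is non-increasing in $z$ for fixed $k$, so the displayed limit is a lower bound for every admissible $z\geq k$ and nothing below changes.)

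Next I would impose that the $k$-th threshold has reached the level $T_\infty^{(\infty)}-1=(\innerd-1)/2-1$, i.e.
\begin{equation*}
  \frac{\innerd-1}{2}-\frac{\innerd+1}{2\,\ell^{k}}\ \geq\ \frac{\innerd-1}{2}-1,
\end{equation*}
which rearranges to the single inequality $\ell^{k}\geq(\innerd+1)/2$, equivalently $k\geq\log_\ell\!\big((\innerd+1)/2\big)$. It then remains to observe that the (slightly cruder) bound in the statement is a fortiori sufficient: for $k=\underline{k}:=\bigceiling{\log_\ell(\innerd+1)}$ one has $\ell^{k}\geq\ell^{\log_\ell(\innerd+1)}=\innerd+1>(\innerd+1)/2$, hence $T_{\underline{k}}^{(\infty)}\geq T_\infty^{(\infty)}-1$; and since $T_k^{(\infty)}$ is increasing in $k$, the same holds for every $k\geq\underline{k}$.

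I do not expect a genuine obstacle here — the heavy lifting (deriving the threshold formula (\ref{eqn:thresholdscol}), which the paper declines to reproduce) is already granted by Theorem~\ref{theorem:thresholds}. The only points needing a little care are the passage to the limit $z\to\infty$ — or, if one insists on finite $z$, the monotonicity of $T_k^{(z)}$ in $z$ — and the algebraic simplification of the threshold inequality down to $\ell^k\geq(\innerd+1)/2$; both are routine.
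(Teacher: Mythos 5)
Your proposal is correct and follows essentially the same route as the paper's own proof: pass to the limit $z\to\infty$ in (\ref{eqn:thresholdscol}) to get $T_k^{(\infty)}=(\innerd-1)/2-(\innerd+1)(\lambda-1)^k/2$ and solve $T_k^{(\infty)}\geq(\innerd-1)/2-1$ for $k$, arriving at $k\geq\log_\ell\big((\innerd+1)/2\big)$. Your closing observation that the statement's cruder value $\underline{k}=\bigceiling{\log_\ell(\innerd+1)}$ is a fortiori sufficient is a small but welcome addition, since the paper's one-line proof stops at $\log_\ell\big((\innerd+1)/2\big)$ and leaves that reconciliation implicit.
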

\begin{proof}
If $z\rightarrow \infty$, then the threshold location function (\ref{eqn:thresholdscol}) becomes
$T_k^{(\infty)}:=(\innerd-1)/2-(\innerd+1)(\lambda-1)^k/2$. But
$T_k^{(\infty)} \geq (\innerd-1)/2-1\Leftrightarrow k \geq \log_\ell\big((\innerd+1)/2\big)$. 
\end{proof}

By Lemma \ref{lemma:neededthresholds} we know that all thresholds $T_k^{(\infty)}$ in the range $\underline{k}<k<\infty$ have equal integer parts and therefore can be omitted. Thus, instead of the infinite threshold set $\{T_1^{(\infty)}, \ldots, T_\infty^{(\infty)}\}$ it is equivalent to consider the finite set
\begin{equation}\label{eqn:Tsetodd}
\mathcal{T}_\mathrm{odd}:=\{T_1^{(\infty)}, \ldots, T_{\underline{k}}^{(\infty)}\}\cup \{T_\infty^{(\infty)}\}
\end{equation}
with only $\underline{k}+1=\bigceiling{\log_\ell\big((\innerd+1)/2\big)}+1$ elements.

We know that if we utilize the sets $\mathcal{T}_\mathrm{even}$ and $\mathcal{T}_\mathrm{odd}$ of thresholds according to (\ref{eqn:thresholdscol}) for even and odd inner minimum distance $\innerd$, respectively, we can decode up to half the minimum distance of the concatenated code $\concc$. However, the integer parts not of all the thresholds among the sets are necessarily pairwise different. Since decoding attempts in respect to thresholds with equal integer parts coincide, the number $z^\star$ of actual decoding attempts which need to be executed to decode up to half the minimum distance of $\concc$ may be smaller than the number of thresholds. We can calculate it explicitly by
\begin{equation}\label{eqn:actual}
z^\star=\left\{\begin{array}{ll}
  \big|\bigcup_{k=1}^{\underline{z}} \big\{\floor{T_k^{(\underline{z})}}\big\}\big|\leq \underline{z}, & $\innerd$\;\text{even}\\
  \big|\bigcup_{k=1}^{\underline{k}} \big\{\floor{T_k^{(\infty)}}\big\} \cup \big\{\floor{T_\infty^{(\infty)}}\big\} \big|\leq \underline{k}+1, & $\innerd$\;\text{odd}.
\end{array}\right.
\end{equation}

\section{Concluding Examples}\label{section:Example}

Our results are subsumed using the following examples. We assume a concatenated code $\concc$ consisting of an inner code $\innerc(\innern, \innerk, \innerd)$ and an outer code $\outerc(\outern, \outerk, \outerd)$ consisting of $\ell$ rows containing code words of the RS code $\mathcal{RS}(2^8; 255, 223, 33)$.

For {\em even} inner minimum distance $\innerd=20$ the decoding bounds (\ref{eqn:boundind}) for independent outer decoding and (\ref{eqn:boundcol}) for collaborative outer decoding, respectively, depending on the number $z$ of thresholds are shown in Figure \ref{fig:outer_irs_even}. According to Lemma \ref{lemma:indeven}, for independent outer decoding $\underline{z}=10$ thresholds are sufficient to decode up to half the minimum distance of $\concc$. If collaborative decoding of $\ell=2$ outer RS codes is applied, we can calculate the number of required thresholds by $\underline{z}=\min\{z\}\;\mathrm{s.t}\;\floor{T_z^{(z)}}=9$ and get $\underline{z}=3$. Both values are confirmed by the bounds in Figure \ref{fig:outer_irs_even}.

\vskip 0.1cm

\begin{figure}[htbp]
\centering
\includegraphics[width=252pt]{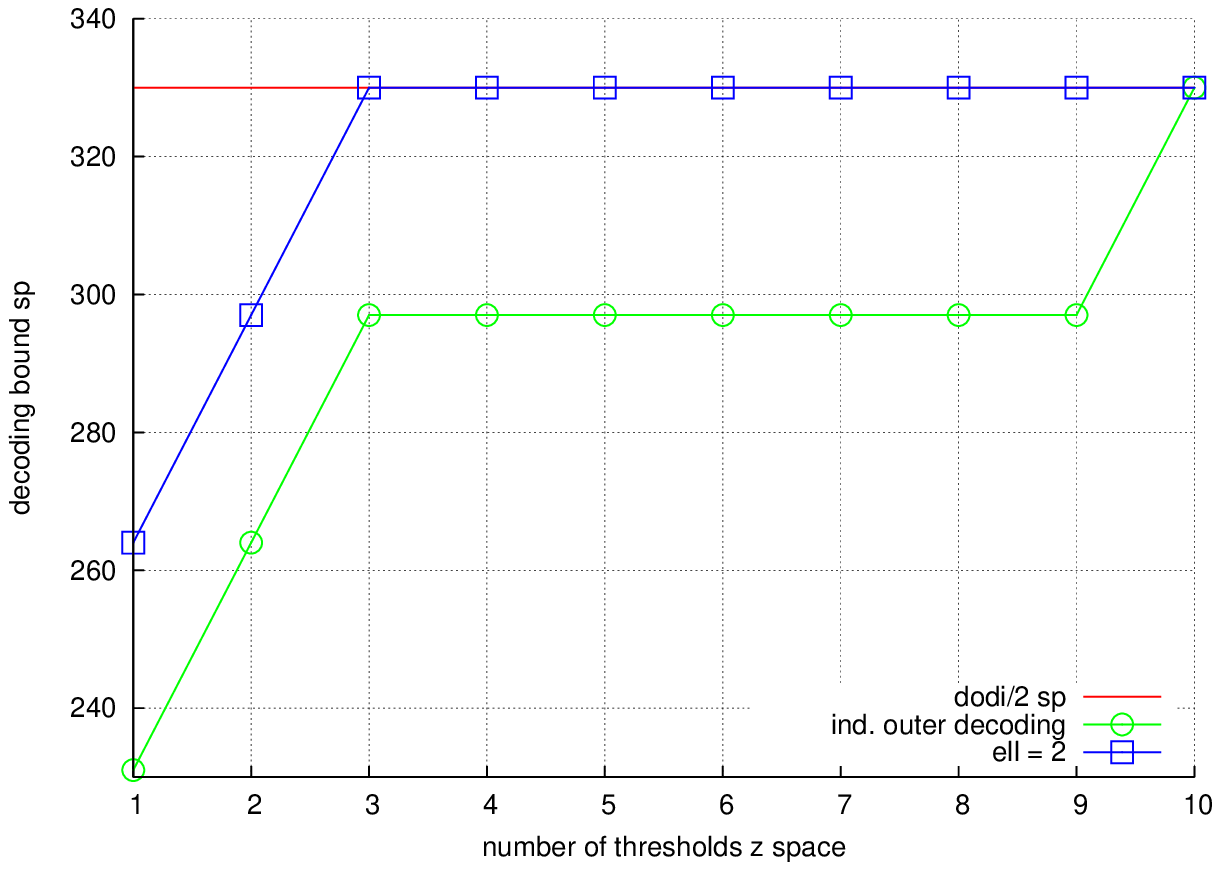}
\caption{Number of thresholds versus decoding bounds (\ref{eqn:boundind}) and (\ref{eqn:boundcol}). The outer code $\outerc$ consists of $\ell=2$ RS codes and $\innerc$ has (even) minimum distance $\innerd=20$.}
\label{fig:outer_irs_even}
\end{figure}

\vskip 0.1cm

If the RS codes are outer codes $\outerc_v, \ldots, \outerc_{v+\ell-1}$ of a GC code as described in Section \ref{section:GCC}, which fulfill (\ref{eqn:l1}), the saving in terms of operations is even greater. Besides the $7$ saved outer decoding attempts, the number of inner decodings can then be cut down by $\outern(\ell-1)=255$.

Note that decoding one IRS code with $\ell$ interleaved RS codes with the algorithm from \cite{schmidt_sidorenko_bossert:2006c} has the same complexity as decoding the $\ell$ RS codes independently. Thus, our comparison of both constructions is fair in terms of complexity.

After establishing the result list $\mathcal{L}$, Lemma \ref{lemma:selection} can be applied to select the transmitted code word among its $|\mathcal{L}|\leq z^\star$ elements.

Figure \ref{fig:attempts} shows the number of actual decoding attempts $z^\star$ as well as the number $\underline{z}$ of thresholds for some reasonable {\em odd} inner minimum distances $\innerd$. Collaborative decoding of $\ell=2$ and $\ell=8$ outer RS codes is considered. For independent outer decoding as described in Section \ref{section:BZouterRS}, $z^\star$ grows linearly with $\innerd$. It diminishes to at most $z^\star=6$ already for an outer IRS code with $\ell=2$. For an outer IRS code with $\ell=8$ already $z^\star=2$ decoding attempts are sufficient to decode up to half the minimum distance of $\concc$ over the full range of all considered odd inner minimum distances $\innerd\in[3, 100]$.

\vskip -0.35cm

\begin{figure}[htbp]
\centering
\includegraphics[width=252pt]{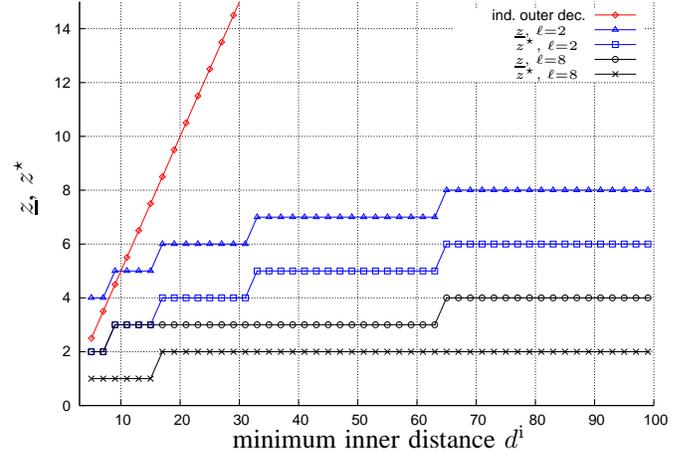}
\caption{Required numbers of thresholds and actual decoding attempts to allow decoding up to half the minimum distance of a concatenated code $\concc$ with parameters as described above. For clarity, only odd $\innerd$ are plotted.}
\label{fig:attempts}
\end{figure}

\bibliographystyle{ieeetr}

\def\noopsort#1{}

\end{document}